
\documentclass[letterpaper, 10 pt, conference]{ieeeconf} 

\IEEEoverridecommandlockouts 

\overrideIEEEmargins     



\usepackage{amsmath,amssymb,amsfonts}

\usepackage{amsthm}

\newtheorem{theorem}{Theorem}

\newtheorem{remark}{Remark}
\newtheorem{assumption}{Assumption}
\newtheorem{definition}{Definition}

\newtheorem*{conjecture*}{Conjecture}
\newtheorem{prop}{Proposition}

\newtheorem{problem}{Problem}
\usepackage[linesnumbered]{algorithm2e}
\usepackage{graphicx}
\usepackage{textcomp}
\usepackage{xcolor}
\usepackage{breqn}
\usepackage{tikz}

\usepackage{cite}

\title{
\LARGE \bf Optimization with Zeroth-Order Oracles in Formation
}

\author{Elad Michael, Daniel Zelazo, Tony A. Wood, Chris Manzie, and Iman Shames
\thanks{E.~Michael, T.~A.~Wood, C.~Manzie, and I.~Shames are with the Department of Electrical and Electronic Engineering, University of Melbourne. 
{\tt\small \{eladm@student, wood.t@,manziec@,ishames@\}unimelb.edu.au}
}
\thanks{D.~Zelazo is with the Faculty of Aerospace Engineering, Israel Institute of Technology, Israel. {\tt\small dzelazo@technion.ac.il}}
}

\begin{document}

\maketitle
\thispagestyle{empty}
\pagestyle{empty}

\begin{abstract}
In this paper, we consider the optimisation of time varying functions by a network of agents with no gradient information. The proposed a novel method to estimate the gradient at each agent's position using only neighbour information. The gradient estimation is coupled with a formation controller, to minimise gradient estimation error and prevent agent collisions. Convergence results for the algorithm are provided for functions which satisfy the Polyak-\L{}ojasiewicz inequality. Simulations and numerical results are provided to support the theoretical results.
\end{abstract}


\section{INTRODUCTION}

In time varying optimisation tasks, the goal is to optimise a sequence of problems where each new objective is a variation of the previous. The time varying function can represent the position of a moving source, with measurements capturing signal strength. We assume that only measurements, with no higher order information, are available at each iteration. To compensate for the lack of gradient information, we consider a cooperating formation of agents, sharing information to minimise the time-varying function. In general, this may be posed as a sequence of independent optimization problems\cite{gutjahr2016lateral}. One could simply treat every new cost function as an entirely new optimization problem, although this may be computationally infeasible. Additionally, solving for the optimum at each iteration is unnecessary if it is sufficient to remain within some neighbourhood of the optimum at every iteration. If there is a limit on the variation of objective parameters between iterations, the solution of the previous iteration can be updated to approach of the solution of the current iteration. 

At each iteration, some amount of information about the changing function $f_k$ must be measured. Here we adopt the term \emph{p-th order oracle}\cite{daniel2019zeroth} to describe the type of available information. If $p=0$ then the zeroth-order oracle only makes available the current function value $f_k(x_k)$, and not any gradient or higher order information. Gradient descent makes use of a first-order oracle, Newton's method a second-order oracle, etc. We derive an iterative approach to track the optima of a changing cost function using zeroth-order oracles, and minimal assumptions on the behavior of $f_k$. This is similar to finite difference stochastic approximation (FDSA), except in this case the function $f_k$ can only be observed at the locations of the agents, rather than user chosen sample points. Therefore, the agents receive information from wherever their neighbours are to compute an approximate descent direction at each iteration. As the accuracy of the agent's gradient estimate is dependent on the geometry of its neighbours, we incorporate a formation control strategy to ensure the gradient estimation is accurate.\\ \indent In the area of online optimization, we will give a short review of the time varying optimization problems, but largely focus on the gradient free solutions which are more relevant to our formulation. Time varying optimization problems are well studied, frequently under the name Online Convex Optimization or OCO\cite{zinkevich2003online} \cite{hazan2016introduction}. A predictive/corrective method for OCO is presented in\cite{lesage2020predictive}, using gradient information and line search methods. Online convex optimisation with constraints is addressed by\cite{simonetto2017prediction}, with regret bounds and convergence results. These approaches use gradient information which we assume is unavailable in this formulation. The term bandit feedback is also used to describe this problem coupled with a zeroth-order oracle, as it conforms to a multi-armed bandit problem with convex costs\cite{daniel2019zeroth}. Regret bounds assuming compactness and convexity are derived in\cite{bubeck2012regret}, and a similar technique is used in \cite{agarwal2010optimal} with a multi-point estimate at each iteration for bandit feedback problems. A similar technique but using only a stochastic two point sampling each iteration is derived in\cite{shamir2017optimal}. These results utilise random or user chosen function sampling at each iteration, are entirely centralised, and assume convexity of the unknown cost functions. A network of zeroth-order oracles localizing the source of a static scalar field is examined in~\cite{khong2014multi}, with existence and convergence results, by assuming the existence of controllers with given properties.\\ \indent In this paper, we present a novel algorithm which combines the information from a network of zeroth oracles to optimise a time-varying cost function. We assume the agents follow single integrator dynamics, and construct a gradient estimate which uses only local information. As well, we provide a novel method of bounding the gradient estimation error, which has an interesting geometric interpretation. As such, we incorporate formation control, along with the gradient descent, to minimise the gradient estimation error. Both gradient estimation and formation control laws require only local information, leading to an entirely distributed approach. Additionally, we allow for a time-varying objective function, and the assumptions on the time-varying objective functions are only the Lipschitz continuity of the gradient and the Polyak-\L{}ojasiewicz inequality. These assumptions are weaker than many which are used to provide the linear convergence of gradient descent algorithms\cite{karimi2016linear}. \\\indent The paper is organised as follows. Section~\ref{sec:probForm} is devoted to basic assumptions on the time-varying function and agent dynamics. Section~\ref{sec:gradOracle} covers the approximation of the gradient given only zeroth-order information from an agent and its neighbours and derives an error bound on the gradient approximation. Section~\ref{sec:optAndForm} introduces and unifies formation control with the minimization. Finally, simulation and conclusions are covered in Section~\ref{sec:simulations}.

\section{Problem Formulation}\label{sec:probForm}

Consider a network of $n$ agents where $x_k^{i} \in \mathbb{R}^d$ denotes the position of the $i$-th agent for $i\in \{1,...,n\}$ at iteration $k$ in dimension $d$. Let $\mathcal{G}=(\mathcal{V},\mathcal{E})$ be the underlying graph of the network with the vertex set $\mathcal{V}=\{1,...,n\}$ and the edge set $\mathcal{E}\subseteq \mathcal{V}\times\mathcal{V}$. The edge set $\mathcal{E}$ captures the communication topology of the network, i.e. agent $i$ receives information from agent $j$ if $(i,j)\in\mathcal{E}$. Denote the neighbour set of each agent $i$ by $\mathcal{N}^{i}$ where $\mathcal{N}^{i}=\{j \mid (i,j) \in \mathcal{E}\}.$ 

In this paper, we consider the edges to be bidirectional, i.e., if $(i,j)\in\mathcal{E}$ then $(j,i)\in\mathcal{E}$. We begin with an assumption on the network structure.
\begin{assumption}\label{ass:networkProp}
Assume that the network is \emph{undirected}, \emph{connected}, and that $|\mathcal{N}^{i}| \geq d \;\forall\; i\in\mathcal{V}$. 
\end{assumption} 
If the network is disconnected, then each connected subnetwork would display the same behavior as is presented in this paper. The assumption that the neighbor set has cardinality greater than or equal to the dimension is necessary for the algorithm presented, and as the authors primarily envision physical applications ($2$ or $3$ dimensional), is not seen as a restrictive assumption.

The agents are modeled as single integrators, with dynamics
\begin{align}
x_{k+1}^{i} = x_k^{i} + u_k^{i}. \label{eq:dyn}
\end{align}
At each time instance $k$, each agent $i$ can measure $y_k^{i} = f_k(x_k^{i})$. Let $\mathcal{X}^*_k$ denote the set of minimisers of the time-varying function $f_k$. The following assumptions hold for the functions being minimised $f_k$.
\begin{assumption}(Differentiability and Lipschitz Gradient):\label{ass:Lipschitz}
The function $f_k:\mathbb{R}^d \rightarrow \mathbb{R}$ is continuously differentiable. The gradient is Lipschitz with constant $L$, that is there exists a positive scalar $L$ such that, $x\in\mathbb{R}^d,\; y\in\mathbb{R}^d$, $$|| \nabla f_k(x) - \nabla f_k(y) || \leq L||x-y||, $$ or equivalently
$$ f_k(y) \leq f_k(x) + \nabla f_k(x)^T(y-x) + \frac{L}{2}||y-x||^2.$$
\end{assumption}

We allow for the cost function $f_k$ to change at each iteration, however we make two assumptions on the changing cost functions.
\begin{assumption}(Polyak Condition):\label{ass:polyak}
There exists a positive scalar $s$ such that $$||\nabla f_k(x)||^2\geq 2s(f_k(x) - f_k(x^*_k))$$ where $x_k^*\in\mathcal{X}^*_k$ are the minimisers of $f_k$.
\end{assumption}
\begin{assumption}(Bounded Drift in Time):\label{ass:drift}
There exist positive scalars $\eta_0$ and $\eta^*$ such that $|f_{k+1}(x)-f_k(x)|\leq \eta_0$ for all $x\in\mathbb{R}^d$ and $|f_k(x^*_k)-f_{k+1}(x^*_{k+1})|\leq \eta^*$.
\end{assumption}
The problem of interest is given below.
\begin{problem}
Let $\mathcal{X}^*_k$ denote the set of minimisers of the time-varying function $f_k$. For a network of $n$ agents modeled in~\eqref{eq:dyn}, under Assumptions~\ref{ass:networkProp}-\ref{ass:drift}, find inputs $u^{i}_k$ for all agents $i\in\mathcal{V}$ given $\mathcal{Y}_k^{i}=\{y_k^{j} \mid j\in\mathcal{N}^i\cup\{i\}\}$, i.e., the set of measurements available to agent $i$ at iteration $k$, and a positive constant $M$ such that $||x_k^{i} - x_k^*||\leq M$ as $k\rightarrow \infty$ where $x^*_k = \arg\min_{x\in\mathcal{X}^*_k} \; \|x^i_k-x\|$.
\end{problem}

\section{Zeroth Order Network}\label{sec:gradOracle}

If, at each time step $k$, each agent $i$ was able to query an oracle and receive $\nabla f_k(x_k^{i})$, then a standard gradient descent method could be used to reach the set of minimisers. Motivated by this, we construct an approximate gradient oracle which combines the set of measurements from the agent and its neighbours to produce a descent direction at each iteration $k$.

Consider the directional derivative along the path from agent $i$ to agent $j\in\mathcal{N}^i$
\begin{align*}
\nabla_{ji} f_k(x_k^{i}) &= \frac{(x_k^{j}-x_k^{i})^T}{||x_k^{j} - x_k^{i}||} \nabla f_k(x_k^{i}).
\end{align*}
We construct an estimate of the gradient $\Lambda^{i}(x_k)$ with an error term $\epsilon^{ji}_k$ 
\begin{align}
\langle v_k^{ji},\Lambda^{i}(x_k)\rangle &= \frac{y_k^{j} - y_k^{i}}{||x_k^{ji}||}, \label{eq:approximation}\\
\langle v_k^{ji},\nabla f_k(x_k^{i})\rangle &= \langle v_k^{ji} , \Lambda^{i}(x_k)\rangle - \epsilon^{ji}_k. \label{eq:exact}
\end{align}
where we are using the shortening $x_k^{ji} = x_k^{j}-x_k^{i}$ to represent the difference vector and $v_k^{ji} = x_k^{ji}/||x_k^{ji}||$ as the unit vector in the difference's direction. We use $\langle u,v\rangle$ to denote the standard inner product when superscripts make $u^Tv$ cumbersome. Note that if the function $f$ was linear, \eqref{eq:approximation} would be the exact directional derivative with $\epsilon^{ji}_k=0.$ Using the estimate $\langle v_k^{ji}, \Lambda^{i}(x_k)\rangle$ and Assumption~\ref{ass:Lipschitz}, the error term $\epsilon^{ji}_k$ is bounded by
\begin{align}
y_k^{j} - y_k^{i} - \langle x_k^{ji}, \nabla f_k(x_k^{i}\rangle  &\leq \frac{L}{2}||x_k^{ji}||^2, \nonumber \\
\epsilon^{ji}_k &\leq \frac{L}{2}||x_k^{ji}||. \label{eq:deltaBound}
\end{align}

However, computing an approximation of $-\nabla f_k(x^i_k)$ to use as a descent direction with bound-able error requires more than just information in the $x^{ji}_k$ direction. We must use more than $1$ neighbour to construct the approximation $\Lambda^{i}(x_k)$ of the full gradient $\nabla f_k(x^i_k)$. At each time step, agent $i$ computes, as the approximate gradient,
\begin{align}
\Lambda^{i}(x_k) &= \left[\sum_{j\in\mathcal{N}^i}v_k^{ji} (v_k^{ji})^T\right]^{-1}\sum_{j\in\mathcal{N}^i}\frac{y_k^{j} - y_k^{i}}{||x_k^{ji}||}v_k^{ji}. \label{eq:gradientEstimate}
\end{align}
Note that if the sum of outer products on the left of~\eqref{eq:gradientEstimate} is not of appropriate rank, it cannot be inverted to estimate the gradient. Additionally, if any adjacent agents coincide, then the gradient estimate $\Lambda^{i}(x_k)$ cannot be computed. Both the rank requirement and the requirement that no agents coincide will be addressed using formation control strategies in Section~\ref{sec:optAndForm}.
\begin{remark}\label{rem:Lambda_i}
Each agent $i$ can compute a local estimate of $\nabla f_k(x^i_k)$ via \eqref{eq:gradientEstimate} using only $x^i_k$ and $x^j_k$, $j\in\mathcal{N}^i$.
\end{remark}

In order to ensure that it is always possible to construct a gradient estimate, the formation control covered in Section~\ref{sec:optAndForm} will preventing neighbours from being co-linear or co-planar.

\begin{theorem}\label{thm:boundedEstimationError}
For a function $f_k$ and a set of agents satisfying Assumptions~\ref{ass:networkProp}-\ref{ass:drift}, the vector $\Lambda^{i}(x_k)$ as defined in~\eqref{eq:gradientEstimate} satisfies
\begin{align}
||\Lambda^{i}(x_k) - \nabla f_k(x_k^{i})|| \leq \delta_k^{i},
\end{align}
where, in $\mathbb{R}^2$, the error $\delta_k^{i}$ is defined to be
\begin{align}
\delta_k^{i} &:= \min_{j,l\in\mathcal{N}^i}\frac{L}{|\langle v_k^{li}, \bar{v}_k^{ji}\rangle|} \max(||x_k^{ji}+x_k^{li}||,||x_k^{jl}||), \label{eq:parallelogramBound}
\end{align}
and we have used $\bar{v}^{ji}$ to indicate a vector which is orthogonal to $v^{ji}$.
\end{theorem}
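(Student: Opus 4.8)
The plan is to collapse the claim to a two-dimensional geometry problem on a single pair of neighbours and then optimise over the pair. Write $e := \Lambda^{i}(x_k) - \nabla f_k(x_k^{i})$ for the estimation error. The starting point is the per-edge identity \eqref{eq:exact}, which states exactly that $\langle v_k^{ji}, e\rangle = \epsilon_k^{ji}$, together with the two-sided form of the bound \eqref{eq:deltaBound}, namely $|\epsilon_k^{ji}| \le \tfrac{L}{2}\|x_k^{ji}\|$ (the reverse inequality is also a consequence of $L$-smoothness). Thus every neighbour $j$ contributes one scalar constraint on $e$: its projection onto the unit direction $v_k^{ji}$.

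First I would fix an arbitrary pair $j,l \in \mathcal{N}^{i}$ with linearly independent directions (such a pair exists in $\mathbb{R}^2$ whenever the estimate is well defined, by the rank condition discussed after \eqref{eq:gradientEstimate}) and resolve $e$ in the orthonormal frame $\{v_k^{ji}, \bar v_k^{ji}\}$. Writing $c := \langle v_k^{ji}, v_k^{li}\rangle$ and $s := \langle v_k^{li}, \bar v_k^{ji}\rangle$, so that $c^2 + s^2 = 1$, the two constraints $\langle v_k^{ji}, e\rangle = \epsilon_k^{ji}$ and $\langle v_k^{li}, e\rangle = \epsilon_k^{li}$ pin down $e$ uniquely, and substituting the second component into the first yields the exact identity
\begin{align*}
\|e\|^2 = \frac{(\epsilon_k^{ji})^2 + (\epsilon_k^{li})^2 - 2c\,\epsilon_k^{ji}\epsilon_k^{li}}{s^2}.
\end{align*}

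Next I would bound the numerator. Setting $a := \|x_k^{ji}\|$ and $b := \|x_k^{li}\|$, the elementary estimate $-2c\,\epsilon_k^{ji}\epsilon_k^{li} \le 2|c|\,|\epsilon_k^{ji}|\,|\epsilon_k^{li}|$ together with the edge bounds $|\epsilon_k^{ji}|\le\tfrac{L}{2}a$ and $|\epsilon_k^{li}|\le\tfrac{L}{2}b$ makes the numerator at most $\tfrac{L^2}{4}(a^2+b^2+2|c|ab)$. The geometric heart of the argument is then the law-of-cosines observation that $\|x_k^{ji}+x_k^{li}\|^2 = a^2+b^2+2cab$ and $\|x_k^{jl}\|^2 = \|x_k^{ji}-x_k^{li}\|^2 = a^2+b^2-2cab$ are precisely the squared lengths of the two diagonals of the parallelogram spanned by $x_k^{ji}$ and $x_k^{li}$, so that $a^2+b^2+2|c|ab = \max(\|x_k^{ji}+x_k^{li}\|^2,\|x_k^{jl}\|^2)$ regardless of the sign of $c$. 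Combining, $\|e\| \le \tfrac{L}{2|s|}\max(\|x_k^{ji}+x_k^{li}\|,\|x_k^{jl}\|)$, which already lies a factor of two inside the stated bound.

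Finally, since $e$ is independent of the chosen pair while the right-hand side is not, the inequality holds simultaneously for every admissible pair $j,l$, and taking the minimum gives $\|e\| \le \delta_k^{i}$ with $\delta_k^{i}$ as in \eqref{eq:parallelogramBound}. The step I expect to require the most care is reconciling the per-edge identity $\langle v_k^{ji}, e\rangle = \epsilon_k^{ji}$ with the definition \eqref{eq:gradientEstimate}: the latter is the least-squares (normal-equation) solution, so the projections hold \emph{exactly} only when $|\mathcal{N}^{i}|=d$, and for larger neighbourhoods one must either carry the residual of the normal equations through the estimate above or argue that the minimising pair can be taken consistent with it. Everything else is elementary, the one genuinely geometric input being the parallelogram identity.
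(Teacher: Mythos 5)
Your proposal is correct and rests on the same skeleton as the paper's proof: the two-sided per-edge bound $|\epsilon_k^{ji}|\le\tfrac{L}{2}\|x_k^{ji}\|$, a two-neighbour geometric argument, and a final minimisation over pairs. The execution differs in one substantive way. The paper argues by containment: each neighbour's inequality \eqref{eq:setFromOneNeighbour} cuts out a band, two non-parallel bands intersect in the parallelogram $\mathcal{P}_{ijl}$, both $\nabla f_k(x_k^i)$ and $\Lambda^i(x_k)$ lie in it, and the error is bounded by the longer diagonal \eqref{eq:diagLengths}. You instead solve for the error vector $e$ exactly in the frame $\{v_k^{ji},\bar v_k^{ji}\}$ and bound its norm algebraically; because this uses the fact that $\Lambda^i(x_k)$ sits at the \emph{centre} of $\mathcal{P}_{ijl}$ rather than merely inside it, you obtain $\tfrac{L}{2|s|}\max(\|x_k^{ji}+x_k^{li}\|,\|x_k^{jl}\|)$, a factor of two sharper than \eqref{eq:parallelogramBound} --- precisely the improvement the paper only remarks on informally after the theorem. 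Your law-of-cosines identification of $a^2+b^2+2|c|ab$ with the longer squared diagonal is the same geometric input as the paper's diagonal computation. The one caveat you raise is genuine, and you should know it applies equally to the paper's own proof: the paper asserts $\|\langle v_k^{ji},\Lambda^i(x_k)\rangle - d_k^{ji}\|=0$ ``from the original definition'' \eqref{eq:approximation}, but the estimate actually computed in \eqref{eq:gradientEstimate} is the normal-equation (least-squares) solution, which satisfies those projection identities exactly only when the system is consistent, e.g.\ $|\mathcal{N}^i|=d$. For $|\mathcal{N}^i|>d$ neither your argument nor the paper's establishes that every residual $\langle v_k^{ji},\Lambda^i(x_k)\rangle - d_k^{ji}$ stays within $\pm a_k^{ji}$ (least squares controls only the sum of squared residuals), so the bound for an arbitrary pair $j,l$ --- and hence the minimum over pairs --- is not fully justified in either proof; you at least name the gap rather than eliding it.
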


\begin{proof}
Recall the error bound on a single directional derivative state in~\eqref{eq:deltaBound}. Let $a^{ji}_k := \frac{L}{2}||x_k^{ji}||$ and $d^{ji}_k := \frac{y_k^{j} - y_k^{i}}{||x_k^{ji}||}$. Rearranging the error bound into a set of inequalities yields
\begin{align}
  d^{ji}_k - a^{ji}_k \leq \langle v_k^{ji},\nabla f_k(x_k^{i})\rangle \leq d^{ji}_k + a^{ji}_k \label{eq:setFromOneNeighbour}.
\end{align}
In $\mathbb{R}^2$ representing the space of all possible gradients, these two inequalities enclose a band of $\mathbb{R}^2$ of width $2a^{ji}_k$ bordered by two parallel lines perpendicular to $v_k^{ji}$. Consider the set of inequalities from an additional neighbour $l\in\mathcal{N}^i$
\begin{align}
  d^{il} - a^{li}_k \leq \langle v_k^{li}, \nabla f_k(x_k^{i})\rangle \leq d^{il} + a^{li}_k \label{eq:setFromTwoNeighbours},
\end{align}
Then as long as $v_k^{li}$ and $v_k^{ji}$ are not parallel they enclose a finite area parallelogram $\mathcal{P}_{ijl}\in\mathbb{R}^2$, and $\nabla f_k(x_k^{i})\in\mathcal{P}_{ijl}$. Note that assuming $v_k^{li}$ and $v_k^{ji}$ are not parallel is equivalent to Assumption~\ref{ass:drift}. The gradient $\nabla f_k(x^i_k)$ is inside the parallelogram $\mathcal{P}_{ijl}$ because it satisfies~\eqref{eq:setFromOneNeighbour} and~\eqref{eq:setFromTwoNeighbours}. From the original definition of the gradient estimate $\Lambda^{i}(x_k)$ in~\eqref{eq:approximation} we have
\begin{align}
  ||\langle v_k^{ji} , \Lambda^{i}(x_k) \rangle - d^{ji}_k|| = 0 \leq a^{ji}_k,
\end{align}
so $\Lambda^{i}(x_k)$ is inside $\mathcal{P}_{ijl}$ well. Therefore, the error $||\Lambda^{i}(x_k) - \nabla f_k(x_k^{i})||$ is bounded by diameter of the smallest ball containing $\mathcal{P}_{ijl}$. The diagonals of $\mathcal{P}_{ijl}$ have lengths
\begin{align}
l = \frac{L}{|\langle v_k^{li}, \bar{v}_k^{ji}\rangle|}||(x_k^{j}-x_k^{i})\pm(x_k^l-x_k^{i})||. \label{eq:diagLengths}
\end{align}
To upper bound the error, the longer diagonal is used. For an agent $i$ with neighbours $j$ and $l$, the longest diagonal thus has length
\begin{align}
  l &= \frac{L}{|\langle v_k^{li} , \bar{v}_k^{ji}\rangle|} \max(||x_k^{ji}+x_k^{li}||,||x_k^{jl}||),
\end{align}
which is the bound used in the Theorem.
\end{proof}

An example of the parallelogram $\mathcal{P}_{ijl}$ is shown in Figure~\ref{fig:2neighbour}, generated with $v_k^{ji} = \frac{1}{\sqrt{2}}[1,1]^T,\;a_k^{ji} = 2,\;d_k^{ji} = 0$, and $v_k^{li} = \frac{1}{\sqrt{5}}[1,-2]^T,\;a_k^{li} = 2,\;d_k^{li} = 3$.

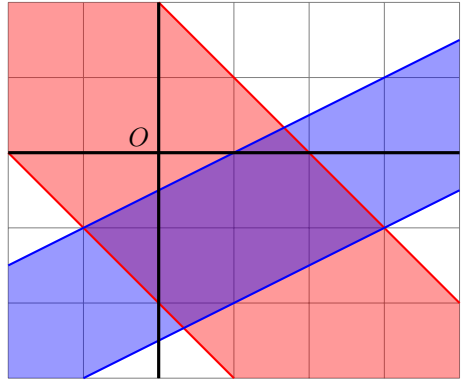
\begin{figure}
 \centering
    \begin{tikzpicture}
    \draw[step=1cm,gray,very thin] (-2,-3) grid (4,2);
  
    \fill [red, fill opacity=.4] 
        (-2,0) -- (-2,2) -- (0,2) -- (4,-2) -- (4,-3) -- (1,-3) -- cycle;
     \fill [blue, fill opacity=.4] 
        (-1,-3) -- (-2,-3) -- (-2,-1.5) -- (4,1.5) -- (4,-0.5)  -- cycle;
    
    \draw[thick, color=red] (-2,0) -- (1,-3);
    \draw[thick, color=red] (0,2) -- (4,-2);
    \draw[thick, color=blue] (-1,-3) -- (4,-.5);
    \draw[thick, color=blue] (-2,-1.5) -- (4,1.5);
    \draw[very thick] (-2,0) -- (4,0);
    \draw[very thick] (0,-3) -- (0,2);
   \draw (0 cm,1pt) -- (0 cm,-1pt) node[anchor=south east] {$O$};
\end{tikzpicture}
\caption{The set of feasible gradients which satisfy~\eqref{eq:setFromOneNeighbour} and~\eqref{eq:setFromTwoNeighbours}. \label{fig:2neighbour}}
\end{figure}

This bound does not take into account where within the parallelogram the gradient estimate $\Lambda^{i}(x_k)$ falls, which may decrease the distance to the farthest point by up to a factor of $2$. If the bound at each iteration $k$ is of interest, it is straightforward to check which corners of the parallelogram the gradient estimate is farthest from. If only the two neighbours $j,l\in\mathcal{N}^i$ which are used to calculate the bound in~\eqref{eq:parallelogramBound} are used to calculate the estimate~\eqref{eq:gradientEstimate}, then it is straightforward to see the estimate \emph{is} the center of the parallelogram and the bound is conservative by a factor of $2$.

An almost identical bounding procedure for the error is possible in $\mathbb{R}^n$, with each neighbour specifying a pair of parallel hyper plane constraints, which given $n$ neighbours result in an n-parallelotope. The approximate gradient formulation~\eqref{eq:gradientEstimate} is the same for any dimension.

Repeating this technique of partitioning the space of possible gradients with information from additional neighbours, the bound can be tightened. However, using additional agents significantly increases the computational burden, as the resulting polytope of possible gradients will have uncertain structure, and maximizing the norm under linear constraints is itself an NP-hard problem. Additionally, empirically the error bound from the complete set of neighbours largely seems to be determined by the pair of neighbours from Theorem~\ref{thm:boundedEstimationError}. Finally, computing the bound involving only $2$ neighbours is independent of the function measurements $y^i_k,y^j_k,y^l_k$. With additional neighbours forming a polytope with more facets, this computational convenience is lost.

Using the gradient approximation from~\eqref{eq:gradientEstimate}, the agents are able to find an approximate descent direction which has an error bounded by~\eqref{eq:parallelogramBound}. To minimise the bound~\eqref{eq:parallelogramBound}, thereby creating a better gradient approximation, we combine formation control with the decentralised minimization.

\section{Optimization in Formation}\label{sec:optAndForm}

Examining the parallelogram $\mathcal{P}_{ijl}$ in Fig.~\ref{fig:2neighbour}, there are two intuitive methods to minimise the diameter of the smallest bounding ball. We can bring the parallel edges closer together, ``thinning'' the parallelogram, and ensure that the two bands are orthogonal, ``squaring'' the parallelogram. These two methods correspond to maximizing the inner product in the denominator in~\eqref{eq:parallelogramBound}, ``squaring'' the parallelogram, and minimizing the distances term, ``thinning'' the parallelogram. The former can be achieved by keeping the vectors $v_k^{li}$ and $v_k^{ji}$ orthogonal, i.e. ensuring that $\langle v_k^{li} , \bar{v}_k^{ji}\rangle =1$ and the latter by keeping the agents as close as possible while maintaining a non-collision guarantee. Finally, it is critical to prevent violation of Assumption~\ref{ass:drift}, where $\langle v_k^{li} , \bar{v}_k^{ji}\rangle = 0$, which geometrically corresponds to both bands in Fig.~\ref{fig:2neighbour} being parallel to each other. We use decentralised navigation functions\cite{tanner2005formation,dimarogonas2010analysis} to maintain a desirable formation while minimizing $f_k$.
\begin{definition}\label{def:navFuncs}
Let $\phi^i:\mathbb{R}^{n_i d} \rightarrow \mathbb{R}^+$ be the navigation potential function for agent $i$ where $n_i = |\mathcal{N}_i| + 1$, with the following properties:
\begin{enumerate}
	\item The function $\phi^i$ is continuously differentiable on $\mathbb{R}^d$. 
	\item The function $\phi^i$ has a unique minimum, only attained when the agents are in the desired formation configuration.
	\item The function $\phi^i$ is Morse (critical points are non-degenerate).
	\item The function can be computed decentrally, i.e., each agent $i$ can compute $\phi^i(x_k)$ using only $x^i_k$ and $x^j_k$, $j\in\mathcal{N}^i$.
\end{enumerate}
\end{definition}
Note that these navigation potential functions exclude distance based approaches such as in~\cite{gazi2007aggregation}, as they are not Morse and we cannot, as of yet, prove the global convergence properties derived here. Using the decentralised navigation functions $\phi^i(x_k)$, and information available locally to each agent $i$, the agents are able to decrease a common global potential function
\begin{align}
\phi(x_k) = \sum_{i\in\mathcal{V}} \phi^{i}(x_k). \label{eq:globalPotential}
\end{align}
\begin{remark}\label{rem:phi_i}
To evaluate $\phi^{i}(x_k)$, agent $i$ needs access only to $x^i_k$ and $x^j_k$, $j\in\mathcal{N}^i$. No information about the position of all other agents is required. Consequently, $i$ can compute $\nabla_{i}\phi(x_k)$ 
using $x^i_k$ and $x^j_k$, $j\in\mathcal{N}^i$.\end{remark}
We make the following assumption throughout the remainder of the paper.
\begin{assumption}\label{ass:phiProperties}
The global potential function $\phi(x_k)$ is continuously differentiable, and the gradient is Lipschitz with constant $L_\phi$.
\end{assumption}
Defining the control input $u_k^{i}:= -\alpha^i_k p^{i}_k$, the dynamics are
\begin{align}
x^{i}_{k+1} = x^{i}_k - \alpha^i_k p^{i}_k,\label{eq:dynWgrad}
\end{align}
where $\alpha^i_k>0$ is a design constant. Define $p^{i}_k$ for agent $i$ at $k$ as
\begin{align}
p^{i}_k = \lambda^{i}_k \Lambda^{i}(x_k) + (1-\lambda^{i}_k) \nabla_{i}\phi(x_k), \label{eq:stepDirection}
\end{align}
where $\Lambda^{i}(x_k)$ is the estimate of the gradient from~\eqref{eq:gradientEstimate} and $\lambda^{i}_k\in[0,1]$ allows the agents to ``focus'' on the primary goal of minimizing $f_k$ while maintaining formation. The rules for deciding the weight $\lambda^{i}_k$ and constant $\alpha^i_k$ are laid out in Theorem~\ref{thm:formationThm}. 
\begin{theorem}\label{thm:formationThm}
Let $\phi(x) = \sum_{i\in\mathcal{V}}\phi^{i}(x_k)$ be the sum of functions $\phi_i(x_k)$, with a Lipschitz continuous gradient with constant $L_\phi$. Let  $\Phi^{i}$, $i\in\{1,\dots,n\}$ be positive constants. For a set of agents with dynamics as in~\eqref{eq:dynWgrad}, with step direction~\eqref{eq:stepDirection}, define the weighting parameter $\lambda^{i}_k$
\begin{align}
\lambda^{i}_k := \min(1, \frac{||\nabla_{i}\phi(x_k)||}{||\Lambda^{i}(x_k) - \nabla_{i}\phi(x_k)||} \frac{\sigma(\Phi^{i})}{\sigma(\phi^{i}(x_k))}),\label{eq:lambdaDetermination}
\end{align}
where $\sigma$ is a class $\mathcal{K}$ function. Define $\bar{\alpha}^{i}_k$ to be
\begin{align}
\bar{\alpha}^i_k = \frac{2c}{||\Lambda^{i}(x_k) - \nabla_{i}\phi(x_k)||^2},
\end{align}
where $c$ is a constant. Let the design constant $\alpha^i_k$ be in the interval
\begin{align}\label{eq:alpha_i}
\alpha^i_k\in(0,\min(\frac{1}{L_\phi},\frac{1}{L},\bar{\alpha}^i_k)]
\end{align}
where $L$ is the Lipschitz constant for the gradient of $f_k$. Then the system is stable, and $\exists k_0$ such that $\forall k\geq k_0$ the global potential function is bounded $\phi(x_k) \leq \sum_i \Phi^i + c$.
\end{theorem}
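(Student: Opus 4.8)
The plan is to use $\phi$ itself as a Lyapunov-like function and run a perturbed gradient-descent argument, where the perturbation is the gradient-tracking term $\Lambda^{i}(x_k)$ that each agent mixes into its step. Since $\nabla\phi$ is $L_\phi$-Lipschitz by Assumption~\ref{ass:phiProperties}, the descent lemma applies to the stacked state, giving
\[
\phi(x_{k+1}) \leq \phi(x_k) + \sum_{i\in\mathcal{V}}\Big(-\alpha_k^{i}\langle \nabla_{i}\phi(x_k), p_k^{i}\rangle + \tfrac{L_\phi}{2}(\alpha_k^{i})^2\|p_k^{i}\|^2\Big).
\]
Abbreviating $g^{i} := \nabla_{i}\phi(x_k)$ and $e^{i} := \Lambda^{i}(x_k) - \nabla_{i}\phi(x_k)$ --- the latter being exactly the quantity that appears in both $\lambda_k^{i}$ and $\bar\alpha_k^{i}$ --- and writing the step as $p_k^{i} = g^{i} + \lambda_k^{i} e^{i}$, I would treat each summand as a per-agent decrement $D^{i}$.

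The key simplification comes from the step-size cap $\alpha_k^{i} \leq 1/L_\phi$ in~\eqref{eq:alpha_i}, which absorbs the quadratic term and collapses each decrement to the clean form
\[
D^{i} \leq \tfrac{\alpha_k^{i}}{2}\big((\lambda_k^{i})^2\|e^{i}\|^2 - \|g^{i}\|^2\big).
\]
Everything then reduces to controlling $\lambda_k^{i}\|e^{i}\|$. From the definition~\eqref{eq:lambdaDetermination}, in either branch of the minimum one has $(\lambda_k^{i}\|e^{i}\|)^2 \leq \|g^{i}\|^2\,\sigma(\Phi^{i})^2/\sigma(\phi^{i}(x_k))^2$; hence whenever agent $i$ is out of formation in the sense $\phi^{i}(x_k) > \Phi^{i}$, monotonicity of the class-$\mathcal{K}$ function $\sigma$ makes the bracket strictly negative and $D^{i} < 0$. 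For the remaining agents I would instead invoke $\alpha_k^{i} \leq \bar\alpha_k^{i} = 2c/\|e^{i}\|^2$, which gives $\tfrac{\alpha_k^{i}}{2}\|e^{i}\|^2 \leq c$ and therefore the universal bound $D^{i} \leq c - \tfrac{\alpha_k^{i}}{2}\|g^{i}\|^2 \leq c$; this is precisely the mechanism by which $\bar\alpha_k^{i}$ caps the formation disruption that an agent chasing the gradient estimate is permitted to cause.

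With these two estimates, the target set $\{\phi \leq \sum_{i}\Phi^{i} + c\}$ should be shown to be attracting: outside it the strict decreases from the out-of-formation agents dominate the bounded slack, so $\phi(x_k)$ is non-increasing and, being bounded below, converges --- giving stability and the existence of $k_0$. I expect the decisive difficulties to be the global bookkeeping rather than the per-agent bounds. First, the mixed-sign decrements must be aggregated so that the ultimate bound is governed by $c$ rather than degrading with the number of agents simultaneously chasing the gradient; this is where the relation between the per-agent thresholds $\Phi^{i}$ and the global level $\sum_{i}\Phi^{i}+c$ has to be exploited carefully. Second, and more fundamentally, the descent can only stall where $g^{i} = 0$ for all $i$, i.e.\ at a critical point of $\phi$; here Definition~\ref{def:navFuncs} is essential, since the Morse property together with the uniqueness of the minimiser ensures every such point outside the desired formation is a non-degenerate saddle whose region of attraction has measure zero, so generic trajectories escape it and continue descending into the target sublevel set.
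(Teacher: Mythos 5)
Your core argument coincides with the paper's proof in Appendix~\ref{app:proofFormationThm}: the same descent-lemma decomposition of $\phi(x_{k+1})-\phi(x_k)$ into per-agent terms, the same use of $\alpha^i_k\leq 1/L_\phi$ to absorb the quadratic term, the same reduction of each decrement to $\frac{\alpha^i_k}{2}\left((\lambda^i_k)^2\|e^i\|^2-\|g^i\|^2\right)$ (exactly \eqref{eq:individualCondition}, with your $e^i=\Lambda^{i}(x_k)-\nabla_i\phi(x_k)$ and $g^i=\nabla_i\phi(x_k)$), and the same case split: when $\phi^i(x_k)\geq\Phi^i$ the bound $\lambda^i_k\|e^i\|\leq\|g^i\|\,\sigma(\Phi^i)/\sigma(\phi^i(x_k))\leq\|g^i\|$ makes the decrement non-positive, and otherwise $\alpha^i_k\leq\bar\alpha^i_k$ caps the increase at $c$. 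The only divergence is at the step you explicitly leave open. The paper sidesteps your global-aggregation worry by asserting the descent inequality \emph{per agent}, i.e.\ $\phi^{i}(x_{k+1})-\phi^{i}(x_k)\leq-\alpha^i_k\langle\nabla_i\phi(x_k),p^i_k\rangle+\frac{L_\phi(\alpha^i_k)^2}{2}\|p^i_k\|^2$, so that each $\phi^i$ is bounded by $\Phi^i+c$ individually and the global bound follows by summation with no cancellation between agents required. (Whether that per-agent inequality is legitimate --- $\phi^i$ also depends on the neighbours' moves --- and whether summing $\Phi^i+c$ over agents really yields $\sum_i\Phi^i+c$ rather than $\sum_i\Phi^i+nc$ are points the paper does not engage with, so your instinct that the bookkeeping is delicate is sound.) Your appeal to the Morse property of Definition~\ref{def:navFuncs} is not part of the paper's argument and is not needed for this theorem: the claim is only boundedness of $\phi$, not convergence to the desired formation, and at a critical point with $g^i=0$ the rule \eqref{eq:lambdaDetermination} forces $\lambda^i_k=0$, hence $p^i_k=0$ and a zero decrement, which is all the boundedness claim requires.
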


\begin{proof}
See Appendix~\ref{app:proofFormationThm}.
\end{proof}
\begin{remark}
In light of \eqref{eq:lambdaDetermination} and \eqref{eq:alpha_i}, and Remarks \ref{rem:Lambda_i} and \ref{rem:phi_i},  it can be seen that each agent $i$ can compute $u^i_k$ using only $x^i_k$ and $x^j_k$, $j\in\mathcal{N}^i$.
\end{remark}
Let $D^i_0$ be the projection of set $\{ x \mid \phi^i(x) \leq \Phi^i + c\}$ onto the subspace defined by $x^i$ and $x^j$, $j\in\mathcal{N}^i$. The boundedness of $\phi(x_k)$ corresponds to trajectories converging to $D^i_0$. We should choose the formation potential functions $\phi^i$ and design constants $\Phi^i,c$ such that having two collinear neighbours (in 2-D) or three coplanar neighbours (in 3-D) is impossible. Thus, we ensure that the matrix in~\eqref{eq:gradientEstimate} is full-rank. 

We may also leverage the convergence of the potential function to bound the gradient estimate error, as the formation fixes the geometry of the estimation. Using the new step direction, the the modified gradient error term is
\begin{align}
||\nabla f_k(x^{i}_k) - p^{i}_k|| &\leq \delta_k^i + (1-\lambda^{i}_k)||\Lambda^{i}(x_k) - \nabla_{i} \phi(x_k)||,\\
&\leq \delta_k^i + ||\Lambda^{i}(x_k)|| +  ||\nabla_{i} \phi(x_k)||,
\end{align}
where $\delta_k^i$ is the upper bound on the error of the original gradient estimate from~\eqref{eq:parallelogramBound}. Define $\rho^i$ be the radius of the smallest ball which contains $D^i_0$ and is centred at $x^i$. By the Lipshitz gradient property of $\phi(x)$, we then have 
\begin{align}
||\nabla f_k(x^{i}_k) - p^{i}_k|| &\leq \delta_k^i + ||\Lambda^{i}(x_k)|| +  L_{\phi}\rho^i \; \forall\;k\geq k_0
\end{align}
Finally if we assume that $||\Lambda^{i}(x_k)||\leq \gamma^i \; \forall \;k\geq k_0$,
\begin{align}
||\nabla f_k(x^{i}_k) - p^{i}_k|| &\leq \delta_k^i + \gamma^i +  L_{\phi}\rho^i \; \forall\;k\geq k_0 \label{eq:gradErrorWFormation}
\end{align}

To achieve the desired cooperative tasks agent $i$ executes the following steps at each $k$, (i) the gradient of $f_k$ at $x^{i}_k$ is estimated using \eqref{eq:gradientEstimate}; (ii) $\nabla_{i} \phi(x_k)$ is computed; (iii) the value of $\lambda^{i}_k$ is chosen via~\eqref{eq:lambdaDetermination}; (iv) the state is updated through~\eqref{eq:stepDirection} and an appropriate choice of $\alpha^i_k$.

We conclude this section by commenting on the overall performance of the agents in tracking the minimiser(s) of $f_k$. To this end, note that the directions $p_k$ at each iteration are still only approximations of the true gradients. The formation of zeroth-order agents cooperating is thus equivalent to individual agents querying a $\delta-$\emph{first order oracle} at each iteration $k$. The definition of a $\delta-$\emph{first order oracle} is given in Definition~\ref{def:deltaOracle}.
\begin{definition}\label{def:deltaOracle}
\emph{($\delta$-first order Oracle):} Given the function $f$ and a point $x$ the oracle returns $p(x) = \nabla f(x) + \delta(x)$ such that $||\delta(x)|| \leq \bar{\delta}$ for some positive scalar $\delta$.
\end{definition}

Here we show that a $\delta$-first order oracle is sufficient to converge to a neighbourhood of the minimisers $\mathcal{X}^*_k$, using~\eqref{eq:gradErrorWFormation} to construct an error bound on the $\delta$-first order Oracle
\begin{align}
\bar{\delta}^i := \delta^i_k + \gamma^i +  L_{\phi}\rho^i. \label{eq:deltaBarDef}
\end{align}
With the bounds introduced in~\eqref{eq:gradErrorWFormation}, we may also define a constant $\alpha$ for each agent,
\begin{align}
\alpha^i \in (0,\min(\frac{1}{L_\phi},\frac{1}{L_\phi},\frac{2c}{\gamma^i(\gamma^i + 2L_{\phi}\rho^i)})],
\end{align}
which satisfies all of the required properties for Theorem~\ref{thm:formationThm}. Note that the $\bar{\delta}$ used in Proposition~\ref{prop:neighbourhood} includes the formation control term $L_{\phi}\rho^i$, because it is an additional error in the gradient estimate, although it benefits the network as a whole.

\begin{prop}\label{prop:neighbourhood}
If $\alpha^i$ is chosen such that $|(1-\alpha^i s)| < 1$, then an agent using the $\delta$-first order oracle will reach an $M = \frac{\eta_0 + \eta^*}{2\alpha^i s^2} + \frac{(\bar{\delta}^i)^2}{4s^2}$ neighbourhood of the optimiser $\mathcal{X}_k^*$ as the time steps $k\rightarrow\infty$.
\end{prop}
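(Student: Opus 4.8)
The plan is to treat the whole formation as a single inexact gradient-descent iteration and reduce the claim to a contraction estimate on the suboptimality gap $r_k := f_k(x_k^i) - f_k(x_k^*)$. By Definition~\ref{def:deltaOracle} the update \eqref{eq:dynWgrad} with step direction \eqref{eq:stepDirection} reads $x_{k+1}^i = x_k^i - \alpha^i p_k^i$ where $p_k^i = \nabla f_k(x_k^i) + \delta_k$ and $\|\delta_k\| \le \bar{\delta}^i$ from \eqref{eq:deltaBarDef}. Starting from the descent form of the Lipschitz-gradient Assumption~\ref{ass:Lipschitz} and substituting this step, I would complete the square: with $\alpha^i \le 1/L$ one has $\tfrac{L(\alpha^i)^2}{2} \le \tfrac{\alpha^i}{2}$, and since $p_k^i - \nabla f_k(x_k^i) = \delta_k$,
\begin{align*}
-\alpha^i \langle \nabla f_k(x_k^i), p_k^i\rangle + \tfrac{\alpha^i}{2}\|p_k^i\|^2 = \tfrac{\alpha^i}{2}\|\delta_k\|^2 - \tfrac{\alpha^i}{2}\|\nabla f_k(x_k^i)\|^2.
\end{align*}
This produces the clean one-step bound $f_k(x_{k+1}^i) \le f_k(x_k^i) - \tfrac{\alpha^i}{2}\|\nabla f_k(x_k^i)\|^2 + \tfrac{\alpha^i}{2}(\bar{\delta}^i)^2$, in which the unknown sign of the cross term $\langle \nabla f_k(x_k^i), \delta_k\rangle$ cancels; this cancellation is the key algebraic step.

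Next I would invoke the Polyak condition (Assumption~\ref{ass:polyak}), $\|\nabla f_k(x_k^i)\|^2 \ge 2s\,r_k$, to turn the one-step bound into the geometric estimate $f_k(x_{k+1}^i) - f_k(x_k^*) \le (1-\alpha^i s)\,r_k + \tfrac{\alpha^i}{2}(\bar{\delta}^i)^2$. The characteristic time-varying difficulty then enters: the left-hand side still measures the old cost $f_k$ at the new point, whereas $r_{k+1}$ is measured with $f_{k+1}$ and against the new optimiser $x_{k+1}^*$. I would close this gap with the bounded-drift Assumption~\ref{ass:drift}, which gives $f_{k+1}(x_{k+1}^i) \le f_k(x_{k+1}^i) + \eta_0$ and $-f_{k+1}(x_{k+1}^*) \le -f_k(x_k^*) + \eta^*$, yielding the affine recursion
\begin{align*}
r_{k+1} \le (1-\alpha^i s)\,r_k + \tfrac{\alpha^i}{2}(\bar{\delta}^i)^2 + \eta_0 + \eta^*.
\end{align*}

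With $|1-\alpha^i s| < 1$ the map is a contraction, so $r_k$ converges and $\limsup_k r_k$ is bounded by the fixed point $r_\infty = \tfrac{(\bar{\delta}^i)^2}{2s} + \tfrac{\eta_0+\eta^*}{\alpha^i s}$. The final step converts this function-value gap into the position bound demanded by the Problem statement: because Assumption~\ref{ass:polyak} is a Polyak--\L{}ojasiewicz inequality it implies a quadratic-growth (error-bound) inequality of the form $\|x_k^i - x_k^*\|^2 \le \kappa\, r_k$; with the constant $\kappa = 1/(2s)$ this reproduces exactly $M = \tfrac{\eta_0+\eta^*}{2\alpha^i s^2} + \tfrac{(\bar{\delta}^i)^2}{4s^2}$.

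I expect the main obstacle to sit at the two joints of the argument rather than in the routine recursion. First, the PL-to-distance conversion: the Polyak condition natively controls only function values, so obtaining a genuine bound on $\|x_k^i - x_k^*\|$ requires the quadratic-growth consequence of PL, and the precise value of $\kappa$ is exactly what fixes the numerical factor in $M$, so this constant must be pinned down carefully. Second, the drift bookkeeping must align the index of the cost function with the index of the optimiser consistently; getting the $\eta_0 + \eta^*$ accounting right, and confirming that $\alpha^i \le 1/L$ is compatible with the admissible interval \eqref{eq:alpha_i} used in Theorem~\ref{thm:formationThm}, is where a careless step would corrupt the constants.
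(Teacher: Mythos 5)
Your proposal follows the paper's own proof essentially step for step: the descent lemma with the completed square that cancels the cross term $\langle\nabla f_k,\delta_k\rangle$, the Polyak condition to obtain the contraction factor $(1-\alpha^i s)$, Assumption~\ref{ass:drift} to produce the $\eta_0+\eta^*$ terms in the affine recursion (the paper's~\eqref{eq:recursive}), the geometric fixed point, and the final PL-to-quadratic-growth conversion with constant $1/(2s)$ that yields $M$. The only remark worth adding is that, like the paper, you assert rather than derive the quadratic-growth step $\|x_k^i-\bar{x}_k\|^2\leq \tfrac{1}{2s}\,(f_k(x_k^i)-f_k^*)$ --- you correctly flag this as the joint where the constant must be pinned down, and it is precisely the step the paper itself states loosely --- so your argument matches the published one in both structure and constants.
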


\begin{proof}
See Appendix~\ref{app:proofProp}.
\end{proof}


\section{Simulations}\label{sec:simulations}


In this section we will implement, illustrate, and analyze the method described in the previous sections. We use a formation potential adapted from~\cite{tanner2005formation}, where each agent uses the following potential function
\begin{align}
\phi^{i}(x_k) = \frac{\sum_{j\in\mathcal{N}^i} ||x^i_k-x^j_k - c^{ij}||^2_2 }{e^{\beta(x_k)}}. \label{eq:formationPotential}
\end{align}
In the formation potential function given in~\eqref{eq:formationPotential}, fully explored in~\cite{tanner2005formation}, the numerator is a quadratic attraction potential to the desired difference between agents $i$ and $j$. The function in the denominator $\beta(x_k)$ is described as a ``collision function'', which is nominally equal to 1 but quickly vanishes as the agents reach a prescribed safety distance of each other or an obstacle. The decentralised formation control from~\cite{tanner2005formation,dimarogonas2010analysis} is shown to almost always converge, except from a set of initial conditions with measure zero. We have chosen the desired displacements $c^{ij}$ to form a hexagon with side lengths $s=4$. The gradient error bound~\eqref{eq:parallelogramBound} is 
\begin{align*}
||\Lambda^{i}(x_k) - \nabla f_k(x_k^{i})|| \leq \frac{L||x_k^{ji}+x_k^{li}||}{|\langle v_k^{li} , \bar{v}_k^{ji} \rangle|} = 2sL + \epsilon(\lambda_0,\Phi^{i}).
\end{align*}
If the agents were in a perfect hexagon formation, $2sL$ would be their error bound, for $s$ the side length and $L$ the Lipschitz constant. However, the formation maintenance is balanced with the minimization goal, so $\epsilon(\lambda_0,\Phi^{i})$ represents the additional error introduced from the choices of nominal weight $\lambda_0$ and the acceptable deviation bound $\Phi^{i}$. The specific choices of these parameters, and their impacts, are examined further in this section. Each agent's eventual nearest neighbours in the hexagon are its neighbour set $\mathcal{N}$.

For the function to be minimised $f_k$, we use convex quadratic function in two dimensions,
\begin{align*}
f_k(x_k^{(i)}) &= x_k^{(i)T} Q x_k^{(i)} + \zeta(t)^T x_k^{(i)},
\end{align*}
with $Q \succeq 0$. The randomly generated quadratic in the following examples is,
\begin{align*}
Q =
\begin{bmatrix}
3.89 & 0.45 \\
0.45 & 5.86
\end{bmatrix},
\end{align*}
to 2 decimal places. To simulate a moving source, the linear term $\zeta(t)$ is used to translate the quadratic along a path in the plane at a constant speed. The nominal weighting between the formation gradient and minimization gradient was $\lambda_0=1$, i.e. fully weighted on the minimization. This ensures that as long as the formation is ``good enough'', the agents will be attaining the best gradient estimate. The class $\mathcal{K}$ function $\sigma(\phi(x))$ used in~\eqref{eq:lambdaDetermination} is $\sigma(z) = z^2$, and the upper bounds for all agents potential functions' are $\Phi^{i}=1$. The trajectories of the agents and source function are shown in Figure~\ref{fig:trajectPhi1}, with the dots and star symbolizing the final position of the agents and optimum of $f_k$.
\begin{figure}[thpb]
 \centering
 \includegraphics[scale=0.55]{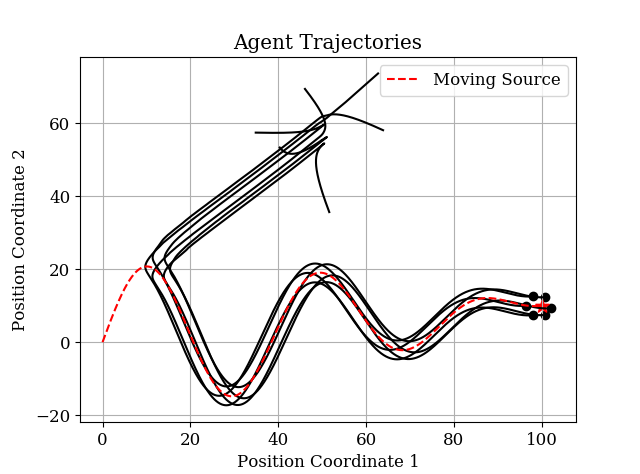}
 \caption{Agent trajectories with upper bound $\Phi=1$. \label{fig:trajectPhi1}}
\end{figure}

Immediately after the random initialization, the agents are not in formation. Therefore, their individual formation potential values $\phi^i(x_k)$ far exceed the prescribed upper bound $\Phi^i$, and they coalesce into formation. Once in formation, or ``close enough'' as determined by $\phi^i(x_k) \leq \Phi^i$, the formation begins converging to the neighbourhood of the minimisers of $f_k$. The minimization error $f_k(x_k) - f_k(x^*_k)$ and neighbourhood bound from~\eqref{eq:recursive} are shown for individual agents in Fig.~\ref{fig:minError}. 
\begin{figure}[thpb]
 \centering
 \includegraphics[scale=0.11]{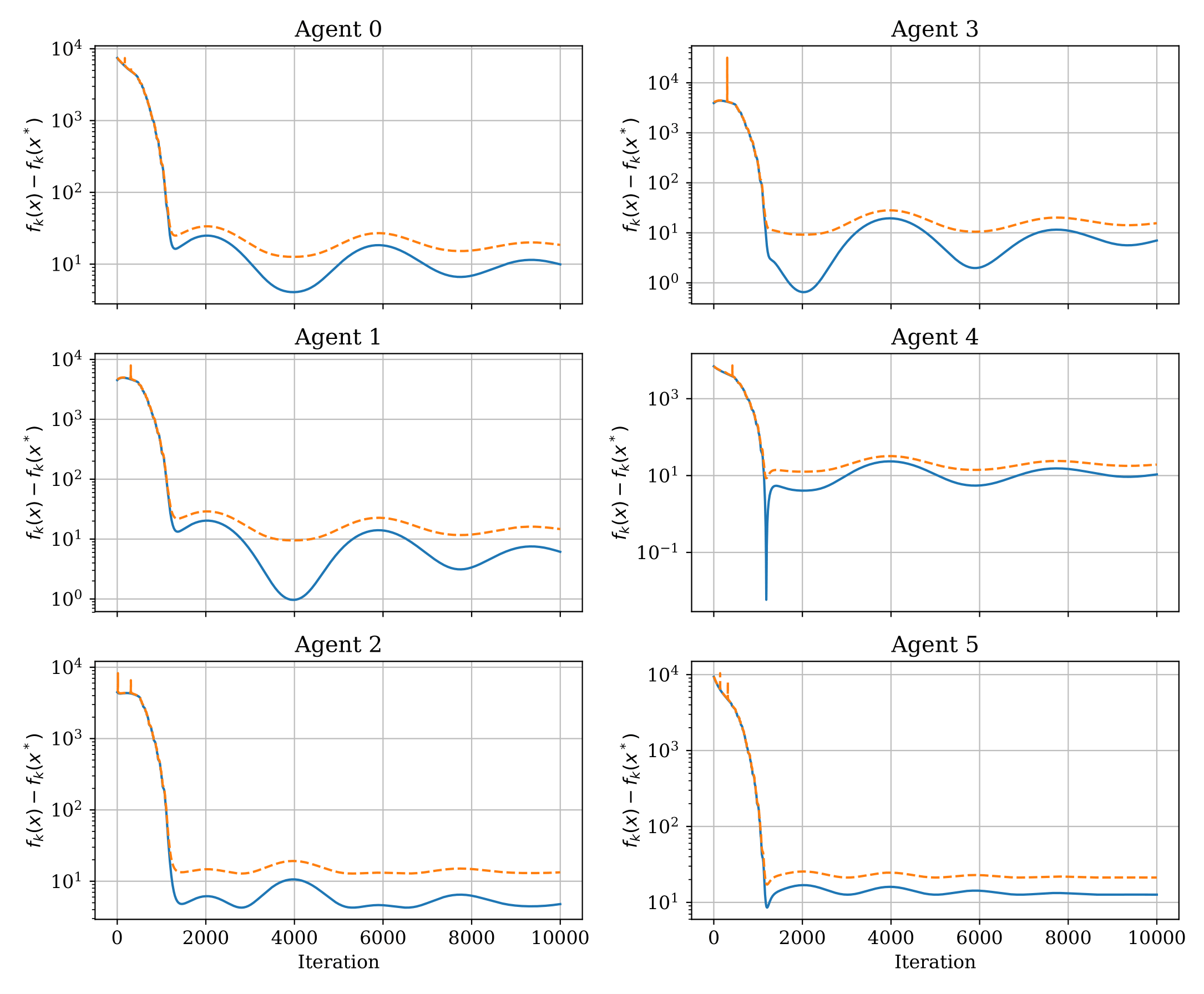}
 \caption{Minimization error with $\Phi=1$. \label{fig:minError}}
\end{figure}
The agents quickly converge to formation around the minimiser, and remains within the neighbourhood, oscillating beneath the bound as the source function $f_k$ changes. If the upper bound $\Phi^{i}$ was decreased, representing a more stringent requirement on the formation control, the formation would converge to the minimisers $\mathcal{X}^*_k$ more slowly. The choice of upper bounded is also clearly tied with the choice of the potential function $\phi(x)$. If there is a critical safety distance, between UAVs for example, then the upper bound $\Phi^i$ must be chosen such that the individual safety distance corresponds to a potential function value \emph{greater than} the upper bound $\Phi^{i}+c$. 

To demonstrate the benefits of the formation control, Fig.~\ref{fig:trajNoForm} shows the trajectories of the same agents without any formation control. The configuration of the agents is significantly looser, and though minimization error in Fig.~\ref{fig:minErrorNoForm} converges more quickly, it is approximately an order of magnitude larger than in Fig.~\ref{fig:minError}. The error of the gradient estimate is high in this case largely due to the distance between the agents being significantly more than in the formation controlled case Fig.~\ref{fig:trajectPhi1}.
\begin{figure}[thpb]
 \centering
 \includegraphics[scale=0.55]{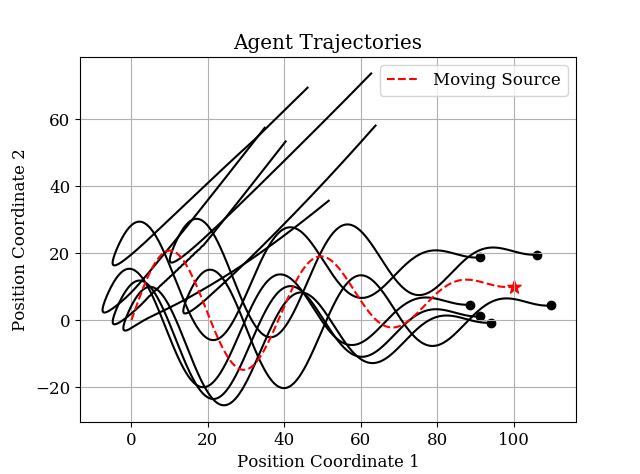}
 \caption{ Agent trajectories with no formation. \label{fig:trajNoForm}}
\end{figure}

\begin{figure}[thpb]
 \centering
 \includegraphics[scale=0.11]{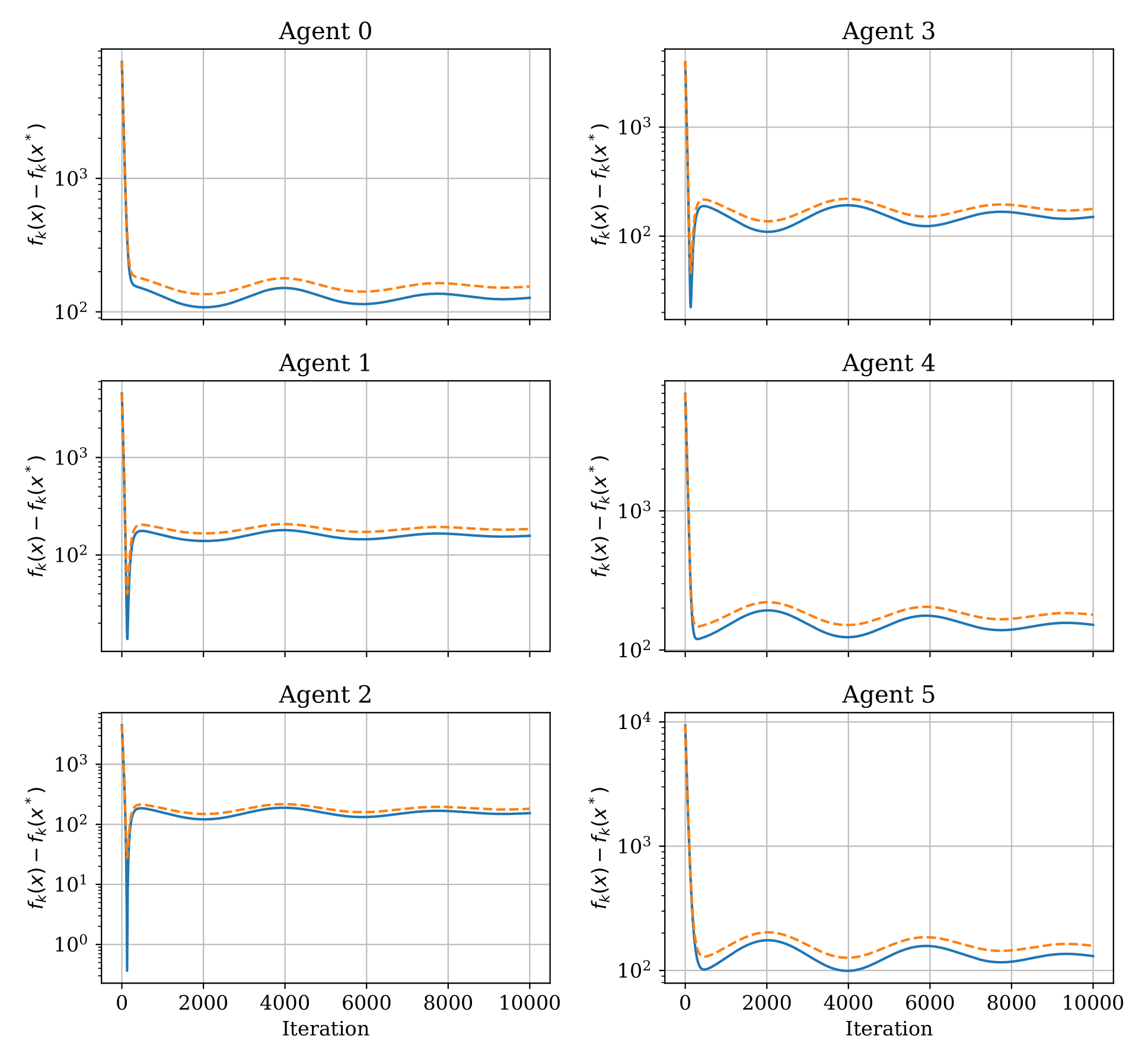}
 \caption{Minimization error with no formation control. \label{fig:minErrorNoForm}}
\end{figure}

\section{Conclusion}

In this paper we consider a formation of agents tracking the optimum of a time varying function $f_k$ with no gradient information. At each iteration, the agents take measurements, compute an approximate descent direction, and converge to a neighbourhood of the optimum. We derive bounds on the neighbourhood of convergence, as a function of the error in the gradient estimate, using minimal assumptions on the time-varying function. As the gradient approximation is constructed in a decentralised way, formation control is used to encourage the agents to formations which improve the gradient estimates, while not overwhelming the task of minimizing the source function $f_k$. We show that the formation control remains within a bounded distance of the optimal formation, and the implications to the convergence of the network to the minima of $f_k$. In the future, a more flexible formation control approach with convergence guarantees as well as hardware experiments will be investigated.

\bibliographystyle{IEEEtran}
\bibliography{IEEEabrv,cdc}
\appendix 


\subsection{Proof of Theorem~\ref{thm:formationThm}\label{app:proofFormationThm}}
We first show that the trajectory of the global potential function $\phi(x)$ is a sum of local information for each agent $i$. Then we prove that if the potential function has violated the upper bound, i.e. $\phi^i(x_k)\geq \Phi^i$, then $\phi^{i}(x_{k+1}) - \phi^{i}(x_k) \leq 0$ along trajectories. Then we show that if $\phi^i(x_k)\leq \Phi^i$, then $\phi^{i}(x_{k+1}) - \phi^{i}(x_k)$ is bounded, which finally gives that $\phi(x_k)$ is bounded for all $k$. Let $A_k$ be a diagonal matrix such that $A^i = \alpha^i_k$.

Begin with the definition of a Lipschitz continuous gradient for the global potential function $\phi(x)$
\begin{align*}
\phi(x_{k+1}&) - \phi(x_k) \leq \nabla \phi(x_k)^T (x_{k+1}-x_k)\\
&\qquad \qquad \qquad \qquad \qquad + \frac{L_\phi}{2} ||x_{k+1}-x_k||^2,\\
&= \nabla \phi(x_k)^T (-A_k p_k) + \frac{L_\phi}{2} ||A_k p_k||^2,\\
&= \sum_{i\in{\mathcal{V}}} \nabla_{i}\phi(x_k)^T(-\alpha^i_k p^{i}_k) + \frac{L_\phi}{2}||\alpha^i_k p^{i}_k||^2,
\end{align*}
which is equivalent to the sum of the Lipschitz conditions for each of the agents $i$ individually, although using the Lipschitz constant $L_\phi$ of the global function. Given Assumption~\ref{ass:phiProperties}, each agent has all the information required to compute their local Lipschitz bound. We proceed with the analysis of the Lipschitz bound of a single agent,
\begin{align}
\phi^{i}(x_{k+1}) - \phi^{i}(x_k) &\leq \nabla_{i} \phi(x_k)^T(-\alpha^i_k p^i_k) + \frac{L_\phi(\alpha^i_k)^2}{2}||p^i_k||^2 \nonumber \\
\intertext{By the choice of $\alpha^i_k\leq \frac{1}{L_\phi}$ we have $L_\phi(\alpha^i_k)^2\leq \alpha^i_k$,}
\phi^{i}(x_{k+1}) - \phi^{i}(x_k) &\leq -\alpha^i_k \nabla_{i} \phi(x_k)^Tp^i_k + \frac{\alpha^i_k}{2}||p^i_k||^2\nonumber \\
&\leq \alpha^i_k (\frac{p^i_k}{2} - \nabla_{i}\phi(x_k))^Tp^i_k\nonumber \\
\intertext{Expanding $p^i_k=\lambda \Lambda^{i}(x_k) + (1-\lambda) \nabla_{i} \phi(x_k)$ and simplifying}
\phi^{i}(x_{k+1}) - \phi^{i}(x_k) &\leq \frac{\alpha^i_k}{2}\lambda^2||\Lambda^{i}(x_k)-\nabla_{i}\phi(x_k)||^2\nonumber\\
&\qquad \qquad \qquad - \frac{\alpha^i_k}{2}||\nabla_{i} \phi(x_k)||^2. \label{eq:individualCondition}
\end{align}

Now suppose that $\phi^{i}(x_k) \geq \Phi^{i}$. We directly have
\begin{align*}
\lambda &\leq \frac{||\nabla_{i} \phi(x_k)||}{||\Lambda^{i}(x_k) - \nabla_{i} \phi(x_k)||} \frac{\sigma(\Phi^{i})}{\sigma(\phi^{i}(x_k))} \\
\lambda &\leq \frac{||\nabla_{i} \phi(x_k)||}{||\Lambda^{i}(x_k) - \nabla_{i} \phi(x_k)||}
\end{align*}
Substituting this bound into~\eqref{eq:individualCondition}, we obtain
\begin{align*}
\phi^{i}(x_{k+1}) - \phi^{i}(x_k) &\leq \frac{\alpha^i_k}{2}\lambda^2||\Lambda^{i}(x_k)-\nabla_{i}\phi(x_k)||^2\\
&\qquad \qquad \qquad \quad - \frac{\alpha^i_k}{2}||\nabla_{i} \phi(x_k)||^2 \\
&\leq 0.
\end{align*}
This shows that if $\phi^{i}(x_k) \geq \Phi^{i}$, then $\phi^{i}(x_{k+1}) - \phi^{i}(x_k) \leq 0$ along trajectories. If we assume that $\phi^{i}(x_k) \leq \Phi^{i}$ we have
\begin{align*}
\phi^{i}(x_{k+1}) - \phi^{i}(x_k) &\leq \frac{\alpha^i_k}{2}\lambda^2||\Lambda^{i}(x_k)-\nabla_{i}\phi(x_k)||^2\\
&\qquad \qquad \qquad \qquad - \frac{\alpha^i_k}{2}||\nabla_{i} \phi(x_k)||^2 \\
&\leq \frac{\alpha^i_k}{2}||\Lambda^{i}(x_k)-\nabla_{i}\phi(x_k)||^2\\
&\qquad \qquad \qquad \qquad - \frac{\alpha^i_k}{2}||\nabla_{i} \phi(x_k)||^2 \\
&\leq c.
\end{align*}
Then the potential value of agent $i$ will remain bounded $\phi^i(x_k) \leq \Phi^i + c$, the global potential function $\phi(x_k)$ will remain bounded by the sum over all agents. 

\subsection{Proof of Theorem~\ref{prop:neighbourhood}\label{app:proofProp}}

The agent identifying superscript is suppressed in this proof, as all calculations correspond to a single agent $i$. Recall, as stated in Assumption~\ref{ass:drift}, that there exist scalars $\eta_0$ and $\eta^*$ which bound functions drift over time. By~\eqref{eq:dynWgrad} and Assumption~\ref{ass:Lipschitz},
\begin{align*}
&f_{k}(x_{k+1}) - f_k(x_k) \leq \nabla f_k(x_k)^T (x_{k+1} - x_k)\\
&\qquad\qquad\qquad\qquad\qquad\qquad + \frac{L}{2}||x_{k+1} - x_k||^2,\\
\begin{split}
&= -\alpha \nabla f_k(x_k)^T \nabla f_k(x) - \alpha \nabla f_k(x_k)^T\bar{\delta} \\
&\qquad\qquad\qquad\qquad\qquad\qquad + \frac{(\alpha)^2 L}{2}||\nabla f_k(x) + \bar{\delta}||^2,
\end{split}\\
\begin{split}
&= -\frac{\alpha}{2} ||\nabla f(x)||^2 +\alpha \left (\frac{\alpha L}{2}-\frac{1}{2} \right )||\nabla f_k(x_k) + \bar{\delta}||^2 \\
&\qquad\qquad\qquad\qquad\qquad\qquad + \frac{\alpha}{2}||\bar{\delta}||^2.
\end{split}
\end{align*}
Restricting $\alpha$ to lie in the interval $\alpha\in[0,\frac{1}{L}]$ such that $\frac{\alpha L}{2}-\frac{1}{2}\leq 0$ and using Assumption~\ref{ass:polyak}, we have
\begin{align*}
f_{k}(x_{k+1}) - f_k(x_k) &\leq -\frac{\alpha}{2} ||\nabla f(x)||^2 + \frac{\alpha}{2}||\bar{\delta}||^2, \\ 
&\leq -s\alpha(f_k(x_k)-f_k^*) + \frac{\alpha}{2}\bar{\delta}^2,
\end{align*}
and therefore we have
\begin{align*}
0 &\leq -s\alpha(f_k(x_k)-f_k^*) + f_k(x_k) - f_{k}(x_{k+1}) + \frac{\alpha}{2}\bar{\delta}^2
\end{align*}
Adding $f_{k+1}(x_{k+1})-f_{k+1}^*$ to both sides, and using the scalar bounds from Assumption~\ref{ass:drift}, we obtain
\begin{align}
f_{k+1}(x_{k+1})-f_{k+1}^* &\leq (1-s\alpha)(f_k(x_k)-f_k^*) + \eta_0 \nonumber\\
& \qquad \qquad +\eta^* + \frac{\alpha}{2}\bar{\delta}^2, \label{eq:recursive}
\end{align}

The recursive relation defined in~\eqref{eq:recursive} can be expressed analytically as
\begin{align}
  &f_{k+1}(x_{k+1})-f_{k+1}^* \leq \frac{\eta_0 + \eta^*}{s \alpha} + \frac{\bar{\delta}^2}{2s} \nonumber\\ 
  &\qquad+(1-s\alpha)^{k+1}\left (f_0(x_0)-f_0^* - \frac{\eta_0 + \eta^*}{s \alpha} - \frac{\bar{\delta}^2}{2s} \right ). \label{eq:analytic}
\end{align}
Finally, given that Assumption~\ref{ass:polyak} is equivalent to $$f_k(x_k) - f_k^* \leq 2s ||x_k-\bar{x}_k||^2,$$ where $\bar{x}_k$ is the projection of $x_k$ onto $\mathcal{X}_k^*$, we obtain
\begin{align}
  &||x_k-\bar{x}_k||^2 \leq \frac{\eta_0 + \eta^*}{2s^2\alpha} + (1-(1-s\alpha)^{k+1})\frac{\bar{\delta}^2}{4s^2} \nonumber\\
  &\qquad+(1-s\alpha)^{k+1}\left (||x_k-\bar{x}_0||^2 - \frac{\eta_0 + \eta^*}{2s^2 \alpha} \right ).\label{eq:issCondition2}
\end{align}
Therefore, as $1-\alpha s\leq 1$, the agent will reach an $M = \frac{\eta_0 + \eta^*}{2\alpha s^2} + \frac{\bar{\delta}^2}{4s^2}$ neighbourhood of the optimiser $\mathcal{X}_k^*$ as the time steps $k\rightarrow\infty$.

\end{document}